\def\qed{\hfill$\Box$}
\newtheorem{corollary}{Corollary~}
\newtheorem{definition}{Definition~}
\newtheorem{theorem}{Theorem~}
\newtheorem{lemma}{Lemma~}
\def\abstract{{\begin{center}
\Large {\bf Abstract}
\end{center} }}
\bmdefine\bchi{\bm\chi}
\title{\bf An extremal problem in coloring of hypergraphs}
\author{Tapas Kumar Mishra \quad Sudebkumar Prasant Pal \\ Dept. of Computer Science and Engineering\\ IIT Kharagpur 721302, India}
\begin{document}
\maketitle
%\makeatletter

%%%%%%%%%%%%%%%%%%%%%%%%%%%%%%%% ABSTRACT %%%%%%%%%%%%%%%%%%%%%%%%%%%%%%%%%%%%%

\begin{abstract}
Let $G(V,E)$ be a $k$-uniform hypergraph.
A hyperedge $e \in E$ is said to be properly $(r,p)$ colored by an $r$-coloring of vertices in $V$ if 
$e$ contains vertices of at least $p$ distinct colors in the $r$-coloring.
An $r$-coloring of vertices in $V$ is called a {\it strong $(r,p)$ coloring}
if every hyperedge $e \in E$ is  properly $(r,p)$ colored by the $r$-coloring.
We study the maximum number of hyperedges that can be properly $(r,p)$ colored by a single $r$-coloring
and the structures that maximizes number of  properly $(r,p)$ colored hyperedges.
%1. exactly i colors
%2. less than equal to
%how they are maximized.
\end{abstract}
\bigskip

\noindent {\bf Keywords:} Hypergraph, coloring, Strong coloring, extremal problem

%%%%%%%%%%%%%%%%%%%%%%%%%%%%%%%% END ABSTRACT %%%%%%%%%%%%%%%%%%%%%%%%%%%%%%%%%%%%%

%%%%%%%%%%%%%%%%%%%%%%%%%%%%%%%% Introduction %%%%%%%%%%%%%%%%%%%%%%%%%%%%%%%%%%%%%

\section{Introduction}
\label{sec:intro}
Let $G(V,E)$ be a $n$-vertex $k$-uniform hypergraph. 
A hyperedge $e \in E$ is {\it properly $(r,p)$  colored} by an $r$-coloring of vertices 
if $e$ consists of at least $p$ distinctly colored vertices.
A {\it strong $(r,p)$  coloring} of $G$ is an $r$-coloring of the vertices of $V$, such that $\forall e \in E$, $e$ consists of at least $p$ distinctly colored vertices.
We note that for a fixed $r$ and $p$, $G$ may not have any strong $(r,p)$  coloring.
Moreover, its not too hard to see that the decision problem is also $NP$-complete, since 
(i) the decision problem of bicolorability of hypergraphs is NP-complete (\cite{lovasz1973coverings}), and (ii) proper $(2,2)$ coloring of $G$
is equivalent to proper bicoloring of $G$.
Given $n$, $k$, $r$, and $p$, we study the maximum number of hyperedges $M(n,k,r,p)$ of any $n$-vertex $k$-uniform hypergraph $G(V,E)$ that can be properly $(r,p)$ colored by a single $r$-coloring.

This problem has an equivalent counterpart in graphs.
A proper coloring of a edge in graphs denotes the vertices of the edge getting different colors.
A graph is properly colored if its every edges is properly colored.
Consider an $r$-coloring of a $n$-vertex graph $H(V,E')$.
For any $K_k$ in $H$, $k \in \mathcal{N}$, a rainbow of size $x$ exists if there exists a $K_x$ which is a subgraph of   the $K_k$, $x \leq k$, and is properly colored.
Consider the problem of finding the maximum number of distinct $K_k$'s in an $r$-coloring such that each $K_k$ has a rainbow of size $p$.
It is easy to see that this problem is equivalent to the problem of finding the maximum number of hyperedges in an $n$-vertex $k$-uniform hypergraph $G(V,E)$ that can be properly $(r,p)$ colored by a single $r$-coloring: each $K_k$ is replaced by a $k$-uniform hyperedge and a rainbow of size $p$ denotes $p$ distinctly colored vertices in the hyperedge.

This problem has been motivated by the separation problems in graphs.
\begin{definition}
Let $[n]$ denote the set ${1,2,...,n}$. A set $S \subseteq [n]$ separates $i$ from $j$ if $i \in S$ and
$j \not\in S$. A set ${\cal S}$  of subsets of $[n]$ is a separator if, for each $i,j \in [n]$ with $i\neq j$,
there is a set $S$ in ${\cal S}$ that separates $i$ from $j$. If, for each $(i,j) \in  [n]*[n]$ with $i\neq j$,
there is a set $S\in{\cal S}$ that separates $i$ from $j$ and a set $T \in {\cal S}$ that separates $j$ from $i$, then ${\cal S}$ is called a complete separator. 
Moreover, with the additional constraint that the sets $S$ and $T$ that separate $i,j$ are required to be disjoint, then ${\cal S}$ is called a total separator.
\end{definition}
We refer the reader to \cite{renyi1961,kat1966,Kat1973,weg1979,dick1969,spencer1970,mao1983} for discussions and results on separating families for graphs. The notion of separation for hyperedges is introduced 
in \cite{tmspp2014}. A family ${\cal S}=\{S_1,...,S_t\}$ is called a separator for a $k$-uniform hypergraph $G(V,E)$,
$S_i \subset V$ for $1 \leq i \leq t$, such that every hyperedge $e \in E$ has a nonempty intersection with 
at least one $S_i$ and $V \setminus S_i$.
 We consider the following problem of separation for $k$-uniform hypergraphs.
Let $G(V,E)$ be a $k$-uniform hypergraph. 
 A set $S_1=\{S_{11},...S_{1r}\}$ $(r,p)$-separates a hyperedge $e \in E$ if 
 (i) $S_{1j}\subset V$, $S_{1j} \neq \phi$, $1 \leq j \leq r$, 
(ii) $\cup_{j}S_{ij}=V$, and,
 (ii)  $e$ has nonempty intersection with at least $min\{|e|,p\}$ elements of $S_1$.
 Observe that the maximum number of hyperedges that can be $(r,p)$-separated by a single family $S_1$
 is $M(n,k,r,p)$.

Consider the problem of maximizing profit between a player $P$ and an adversary $A$.
Adversary $A$ provides $n$, $k$, $r$ and $p$ to the player $P$.
$P$ performs some calculation on those parameters and finds out a number \#$e$.
% that denotes the number of $k$-uniform hyperedges on some $n$-vertex hypergraph.
Now, $A$ constructs a $n$-vertex $k$-uniform hypergraph with \#$e$ hyperedges and colors  the vertices with $r$-colors.
If $A$ can properly $(r,p)$ color at least \#$e$ hyperedges in a hypergraph, then $A$ wins.
If $A$ cannot properly $(r,p)$ color at least \#$e$ hyperedges in a hypergraph, then $P$ wins.
However, the profit of $P$ is given by $\binom{n}{k}-\text{\#}e$.
So, given a fixed $n$, $k$, $r$ and $p$, what value of \#$e$ should $P$ use so that he is guaranteed a win and his profit is maximized.
Observe that if $P$ chooses \#$e$ to be $M(n,k,r,p)+1$, then he is guaranteed a win with maximum profit.

The problem has many applications in resource allocation and scheduling.
Consider the problem where there are total $n$ resources $\{v_1,...,v_n\}$, 
$m$ processes $\{e_1,...,e_m\}$.
Each process has a distinct wish-list of  $k$ resources.
There are $r$ time slots.
A process can execute if it gets at least $p$ distinct resources in different time slots.
The problem is to maximize the number of processes that can be executed within $r$ time slots.
The solution to the above problem is equivalent to the maximum number of hyperedges that can be properly $(r,p)$ colored by a single $r$-coloring in an $n$-vertex $k$-uniform hypergraph $G(V,E)$, where $V= \{v_1,...,v_n\}$,
$E=\{e_1,...,e_m\}$.
Throughout the paper, $G$ denotes a $k$-uniform hypergraph with
vertex set $V$ and hyperedge set $E$, unless otherwise stated. 

%%%%%%%%%%%%%%%%%%%%%%%%%%%%%%%% END Introduction %%%%%%%%%%%%%%%%%%%%%%%%%%%%%%%%%%%%%
%%%%%%%%%%%%%%%%%%%%%%%%%%%%%%%% Survey %%%%%%%%%%%%%%%%%%%%%%%%%%%%%%%%%%%%%
\subsection{Motivation}
\label{subsec:motiv}
Tur\'{a}n's theorem is a fundamental result in graph theory that gives the maximum number of edges that can be present in a $K_{r+1}$ free graph. The problem was first stated by Mantel \cite{DW2001,MAGZ1998} for the special case of triangle free graphs.
He proved that the maximum number of edges in an $n$-vertex triangle-free graph is $\lfloor \frac {n^2} 4 \rfloor$.
 Tur\'{a}n \cite{MAGZ1998} posed the same problem for general $K_{t+1}$-free graphs and showed that the maximum number of 
edges in an $n$-vertex $K_{t+1}$-free graph is $(1-\frac{1}{t})\frac{n^2}{2}$.
The graph with $(1-\frac{1}{t})\frac{n^2}{2}$ edges is a Tur\'{a}n graph $T(n,t)$ - a complete $t$-partite graph with the size of partite sets differing by at most 1.
In the same spirit, Erd\'{o}s et al.\cite{MR0018807} posed the question of maximum number of edges in a graph $G(V,E)$ that does not contain some arbitrary subgraph $F$. 
%The function $ex(n,F)$ is defined as follows : 
%\begin{align*}
%ex(n,F)=max\{|E|:|V|=n, F \not\subseteq G\}.
%\end{align*} 
\begin{definition}
Given an $k$-uniform hypergraph $F$ the Turan number $ex(n, F)$ is the maximum number of hyperedges in an $n$-vertex
$k$-uniform hypergraph not containing a copy of $F$. The Tur\'{a}n density $\pi(F)$ of F is
\begin{align*}
\pi(F)=\lim_{n \to \infty} \frac{ex(n, F)}{\binom{n}{k}}.
\end{align*}
\end{definition}
%\begin{theorem}[Erdos-Stone-Simonovits \cite{4}]\label{ESS}
They showed that for a arbitrary graph $F$ and a fixed $\epsilon>0$, there exists a $n_0$ such that for any $n > n_0$,
\begin{align*}
(1-\frac{1}{\chi(F)-1}-\epsilon)\frac{n^2}{2}\leq ex(n,F) \leq (1-\frac{1}{\chi(F)-1}+\epsilon)\frac{n^2}{2},
\end{align*}
where $\chi(F)$ denotes the chromatic number of graph $F$.
For complete graph $K_{t+1}$, the chromatic number is $t+1$; so, the result due to Erd\'{o}s et al. for $ex(n,F)$ reduces to an approximate version of Tur\"{a}n's theorem. If $F$ is bipartite, $ex(n,H) \leq \epsilon n^2$, for $\epsilon>0$. 
This also implies that for a graph $G$, $\pi(F)=(1-\frac{1}{\chi(F)-1})$.

Having solved the problem for $F = K_t$, Tur\'{a}n \cite{tur1961} posed the natural generalization of the problem for determining $ex(n,F)$ where $F = K_t^k$ is a complete $k$-uniform hypergraph on $t$ vertices. 
The minimum number of hyperedges in an $k$-uniform hypergraph $G$ on $n$ vertices such that any
subset of $r$ vertices contains at least one hyperedge of $G$ is the 
{\it Tur\'{a}n number} $T(n, r, k)$.
Note that $G$ has this property
if and only if the {\it complementary} $k$-uniform hypergraph $G'$ is $K_r^k$-free; 
thus $T(n, r, k) + ex(n, K^k_r) = \binom{n}{k}$.
There is extensive study of both $T(n,r,k)$ and $ex(n, K^k_r)$ and we refer the reader to two surveys \cite{sidorenko95,keevash11} for details.
All the above results assumes that the host graph or hypergraph is arbitrary.
Mubayi and Talbot \cite{MubTal08}, and Talbot \cite{Talbot07} introduced Tur\'{a}n problems with coloring conditions, which could also
be viewed from the perspective of a constrained host graph.
% A $k$-uniform hypergraph
%$G$ is strongly $r$-colorable if there is a $r$-coloring of its vertices such that no hyperedge
%has more than one vertex of the same color. 
They considered a new type of extremal hypergraph problem: given an $k$-uniform hypergraph $F$ and an integer $r \geq 2$, determine the maximum number of hyperedges in an $F$-free, $r$-colorable $r$-graph on $n$ vertices.
In similar direction, we pose the following problem:
maximize the number of hyperedges in a $r$-coloring of a $n$-vertex $k$-uniform hypergraph $G$, such that
no hyperedge of $G$ consists of less than $p$ colors.

%followed by a more general result by Erdos-Stone.

%%%%%%%%%%%%%%%%%%%%%%%%%%%%%%%% End Survey %%%%%%%%%%%%%%%%%%%%%%%%%%%%%%%%%%%%%
%%%%%%%%%%%%%%%%%%%%%%%%%%%%%%%% Summary of results %%%%%%%%%%%%%%%%%%%%%%%%%%%%%%%%%%%%%
\subsection{Our Results}
\label{subsec:results}

In order to estimate  $M(n,k,r,p)$, we first consider the case when $r$ divides $n$ and compute 
the number of distinct hyperedges that consists of exactly $p$ distinct colors 
under any balanced $r$ coloring of a $K_n^k$.
%We study the number of distinct hyperedges that consists of exactly $p$ distinct colors 
%under any balanced $r$ coloring of a $K_n^k$.
Let $m(n,k,r,p)$ denote the number of distinct hyperedges that consists of exactly $p$ distinct colors 
under any balanced $r$ coloring of a $K_n^k$. We prove the following lemma.

\begin{lemma}\label{lemma:exactmin}
For a fixed value of $n$, $k$, $r$ and $p$,
$m(n,k,r,p)= \binom{r}{p}\Big( \binom{\frac{n}{r}p}{k}- p\binom{\frac{n}{r}(p-1)}{k}+\binom{p}{2}\binom{\frac{n}{r}(p-2)}{k}
\allowbreak ... (-1)^c \binom{p}{c}\binom{\frac{n}{r}c}{k} \Big)$,
where $c$ is the smallest integer such that $\frac{n}{r}c >= k$.
\end{lemma}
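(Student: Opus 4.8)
The quantity $m(n,k,r,p)$ counts $k$-subsets of $V$ that see exactly $p$ of the $r$ color classes under a balanced coloring, where each class has size $n/r$. The plan is to first choose which $p$ colors are used — there are $\binom{r}{p}$ ways — and then, for a fixed set of $p$ color classes whose union $U$ has size $\frac{n}{r}p$, count the $k$-subsets of $U$ that meet all $p$ classes. Call this inner count $N$. By symmetry $N$ does not depend on which $p$ classes were chosen, so $m(n,k,r,p)=\binom{r}{p}N$, and it remains to show
\begin{align*}
N=\sum_{j=0}^{c-1}(-1)^{j}\binom{p}{j}\binom{\tfrac{n}{r}(p-j)}{k},
\end{align*}
where $c$ is least with $\frac{n}{r}c\ge k$.

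**The inclusion–exclusion step.**

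First I would count $k$-subsets of $U$ with \emph{no} constraint: that is $\binom{\frac{n}{r}p}{k}$. From this I subtract the "bad" subsets, where a subset is bad if it misses at least one of the $p$ chosen classes. For a fixed subclass $A$ of $j$ of the $p$ chosen colors, the number of $k$-subsets of $U$ avoiding all colors in $A$ is $\binom{\frac{n}{r}(p-j)}{k}$, since we must pick from the remaining $p-j$ classes spanning $\frac{n}{r}(p-j)$ vertices. By inclusion–exclusion over which classes are avoided,
\begin{align*}
N=\sum_{j=0}^{p}(-1)^{j}\binom{p}{j}\binom{\tfrac{n}{r}(p-j)}{k}.
\end{align*}
This is the standard surjection-type count and matches the claimed alternating sum once we observe that $\binom{m}{k}=0$ whenever $m<k$: the term for index $j$ vanishes as soon as $\frac{n}{r}(p-j)<k$, i.e. as soon as $p-j<c$, so the sum truncates at $j=p-c$, equivalently the last surviving term has lower argument $\frac{n}{r}c$. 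Rewriting the truncated sum with $i=p-j$ gives exactly the displayed expression in Lemma~\ref{lemma:exactmin}, with signs $(-1)^{p-i}$ absorbed into the stated pattern; matching the signs and the index of the final term $(-1)^{c}\binom{p}{c}\binom{\frac{n}{r}c}{k}$ is a short bookkeeping check that I would spell out, noting the parity convention used in the statement.

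**Where the friction is.**

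There is no deep obstacle here; the only things requiring care are (i) justifying that the inner count is independent of the chosen $p$-set of colors, which is immediate from the balanced assumption, so that the outer factor $\binom{r}{p}$ genuinely factors out; and (ii) reconciling the truncation point and the alternating signs in the paper's telescoped form with the clean inclusion–exclusion sum — in particular confirming that $c$ is defined so that $\frac{n}{r}c\ge k$ and that this is exactly the index at which the binomial coefficients stop vanishing. I would therefore structure the write-up as: fix the $p$ colors, state the inclusion–exclusion identity for $N$, invoke $\binom{m}{k}=0$ for $m<k$ to truncate, and finally multiply by $\binom{r}{p}$. A one-line remark that the formula is independent of the particular balanced coloring (only the class sizes matter) closes the argument.
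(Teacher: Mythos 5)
Your proof is correct and takes essentially the same route as the paper: fix one of the $\binom{r}{p}$ choices of $p$ color classes, apply inclusion--exclusion over the missed classes inside their union, truncate the sum using $\binom{m}{k}=0$ for $m<k$, and multiply by $\binom{r}{p}$ by symmetry of the balanced partition. Your bookkeeping remark is also apt: the last surviving term naturally carries sign $(-1)^{p-c}$ rather than the $(-1)^{c}$ written in the statement (the paper's own intermediate equation has the same indexing slip), but the alternating pattern starting from $+\binom{\frac{n}{r}p}{k}$ fixes the signs unambiguously.
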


Observe that summing over all the hyperedges with exactly $i$ distinct colors, $1 \leq i \leq p-1$,
we get the number of hyperedges that are colored with at most $p-1$ colors by any balanced $r$-coloring, provided $r$ divides $n$.
%Therefore, we conclude the following lemma.
%\begin{lemma}\label{lemma:exact}
%The exact number of properly $(r,p)$ colored hyperedges of a $K_n^k$ in a balanced $r$-coloring is $M(n,k,r,p)=\binom{n}{k} - \sum_{i=1}^{p-1} m(n,k,r,i)$,
%where $m(n,k,r,i)= \binom{r}{i}\Big( \binom{\frac{n}{r}i}{k}- i\binom{\frac{n}{r}(i-1)}{k}+\binom{i}{2}\binom{\frac{n}{r}(i-2)}{k}... (-1)^c \binom{i}{c}\binom{\frac{n}{r}c}{k} \Big)$, and,
%$c$ is the smallest integer such that $\frac{n}{r}c >= k$, given $r$ divides $n$.
%\end{lemma}
In Section \ref{sec:max}, we show that  the number of distinct hyperedges that consists of at least $p$ distinct colors
is maximized when the $r$-coloring is balanced. Therefore, we conclude the following theorem.

\begin{theorem}\label{thm:exactmax}
The maximum number of properly $(r,p)$ colored hyperedges of a $K_n^k$ in any $r$-coloring
(i) is $M(n,k,r,p)=\binom{n}{k} - \sum_{i=1}^{p-1} m(n,k,r,i)$,
where $m(n,k,r,i)= \binom{r}{i}\Big( \binom{\frac{n}{r}i}{k}- i\binom{\frac{n}{r}(i-1)}{k}+\binom{i}{2}\binom{\frac{n}{r}(i-2)}{k}... (-1)^c \binom{i}{c}\binom{\frac{n}{r}c}{k} \Big)$, and,
$c$ is the smallest integer such that $\frac{n}{r}c >= k$, and,
(ii) the $r$-coloring that maximizes the number of properly colored hyperedges splits the vertex set into equal sized parts,
 provided $r$ divides $n$.
\end{theorem}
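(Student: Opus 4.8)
The plan is to reduce the extremal statement to a local ``smoothing'' argument. Observe first that a hyperedge of $K_n^k$ fails to be properly $(r,p)$ colored by an $r$-coloring exactly when it receives at most $p-1$ distinct colors; hence maximizing the number of properly $(r,p)$ colored hyperedges over all $r$-colorings is equivalent to minimizing the number $B(\chi)$ of hyperedges that receive at most $p-1$ colors under the $r$-coloring $\chi$. When $r$ divides $n$ and $\chi$ is balanced, Lemma~\ref{lemma:exactmin} says that $m(n,k,r,i)$ is the number of hyperedges receiving \emph{exactly} $i$ colors; since a nonempty hyperedge always uses at least one color, $B(\chi)=\sum_{i=1}^{p-1}m(n,k,r,i)$ for balanced $\chi$. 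Thus both parts of the theorem follow once we prove that some balanced $r$-coloring minimizes $B$.

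To establish this, I would show the following local move is monotone. Let $\chi$ have color classes of sizes $n_1,\dots,n_r$ with $n_a\ge n_b+2$ for some colors $a,b$, and let $\chi'$ be obtained from $\chi$ by recoloring one vertex $v$ of class $a$ with color $b$; the claim is $B(\chi')\le B(\chi)$. Split the hyperedges of $K_n^k$ into those not containing $v$ and those of the form $e=\{v\}\cup e'$ with $e'\subseteq V\setminus\{v\}$, $|e'|=k-1$. A hyperedge not containing $v$ has the same color set under $\chi$ and $\chi'$ (its color-$a$ vertices form the same set, and $v$ is not among them), so its status is unchanged. For $e=\{v\}\cup e'$, write $C(e')$ for the color set of $e'$ (identical under $\chi$ and $\chi'$); then the color set of $e$ is $C(e')\cup\{a\}$ under $\chi$ and $C(e')\cup\{b\}$ under $\chi'$. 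A short case analysis on whether $a,b\in C(e')$ shows the only status changes occur when $|C(e')|=p-1$: $e$ becomes properly colored for every $e'$ with $|C(e')|=p-1$, $a\in C(e')$, $b\notin C(e')$ (call this family $X$), and $e$ stops being properly colored for every $e'$ with $|C(e')|=p-1$, $b\in C(e')$, $a\notin C(e')$ (call this family $Y$). Hence $B(\chi)-B(\chi')=|X|-|Y|$, and it suffices to prove $|X|\ge|Y|$.

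For $|X|\ge|Y|$ I would group members of $X$ and $Y$ by the set $S$ of their remaining $p-2$ colors, so $S$ ranges over the $(p-2)$-subsets of $[r]\setminus\{a,b\}$. For fixed $S$, the elements $e'\in X$ with $C(e')=\{a\}\cup S$ are exactly the $(k-1)$-subsets of the union of (class $a$ minus $v$) with the classes of $S$ that meet all $p-1$ of these classes; their number is $f\bigl(n_a-1;(n_s)_{s\in S}\bigr)$, where $f(t;m_1,\dots,m_{p-2})$ denotes the number of $(k-1)$-subsets of a set partitioned into parts of sizes $t,m_1,\dots,m_{p-2}$ that intersect every part. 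Likewise the elements $e'\in Y$ with $C(e')=\{b\}\cup S$ number $f\bigl(n_b;(n_s)_{s\in S}\bigr)$. The function $f$ is nondecreasing in $t$: expanding $f(t+1;m_1,\dots,m_{p-2})$ according to whether a subset uses the single extra vertex of the first part, the subsets avoiding it are counted by $f(t;m_1,\dots,m_{p-2})$ while the rest contribute a nonnegative term. Since $n_a-1\ge n_b$, summing $f(n_a-1;(n_s)_{s\in S})\ge f(n_b;(n_s)_{s\in S})$ over all $S$ gives $|X|\ge|Y|$, as required.

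Finally, starting from an arbitrary $r$-coloring and repeatedly applying the move whenever two classes differ in size by at least $2$, the value $B$ never increases, and the process terminates since $\sum_i\binom{n_i}{2}$ strictly decreases at each step. When $r\mid n$, the only colorings admitting no further move have all class sizes equal to $n/r$ (if the sizes are not all equal, some class exceeds $n/r$ and another is smaller, so the largest and smallest differ by at least $2$). Hence a balanced coloring minimizes $B$; combined with the first paragraph this yields $M(n,k,r,p)=\binom{n}{k}-\sum_{i=1}^{p-1}m(n,k,r,i)$ and shows an optimal coloring splits $V$ into equal parts, proving (i) and (ii). The crux of the argument — and the step most prone to error — is the case analysis identifying $B(\chi)-B(\chi')$ with $|X|-|Y|$ together with the monotonicity of $f$; the remainder is bookkeeping. (For $r\nmid n$ the same moves are available until every class size is $\lceil n/r\rceil$ or $\lfloor n/r\rfloor$, so a most-balanced coloring is optimal, which is the natural reading of part (i) in that case.)
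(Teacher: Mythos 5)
Your proposal is correct and follows essentially the same route as the paper: both reduce the problem to minimizing the number of hyperedges receiving at most $p-1$ colors and prove, via the same one-vertex switching move from a class that is larger by at least two to a smaller class, that the bad hyperedges lost dominate those gained, the only difference being bookkeeping (you group the affected hyperedges by their color set $S$ and use monotonicity of the part-size count $f$, while the paper groups by the number of vertices chosen from $A_2$ versus $A_1\setminus\{v\}$ and compares the resulting sums term-wise). If anything, your version is slightly more careful, since you establish the non-strict inequality together with an explicit termination argument, whereas the paper asserts a strict decrease at each switch.
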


Furthermore, we generalize the above theorem for arbitrary $n$ i.e. to cases where $n$ does not divide $r$ and derive a upper and lower bound for $M(n,k,r,p)$ as given by the following theorem.

\begin{theorem}\label{thm:boundminmax}
For a fixed $n$, $k$, $r$ and $p$, the maximum number of of properly $(r,p)$ colored $k$-uniform hyperedges $M(n,k,r,p)$ on any $n$-vertex hypergraph $G$ is at most  $\binom{n}{k} - \sum_{i=1}^{p-1} m(n_1,k,r,i)$ and at least 
$\binom{n}{k} - \sum_{i=1}^{p-1} m(n_2,k,r,i)$,
where  $n_1=\lfloor\frac{n}{r} \rfloor \cdot r$, $n_2=\lceil \frac{n}{r} \rceil \cdot r$, and
$m(n',k,r,i)= \binom{r}{i}\Big( \binom{\frac{n'}{r} i}{k}- i\binom{ \frac{n'}{r} (i-1)}{k}+\binom{i}{2}\binom{ \frac{n'}{r} (i-2)}{k}... (-1)^c \binom{i}{c}\binom{ \frac{n'}{r}  c}{k} \Big)$, and,
$c$ is the smallest integer such that $\frac{n'}{r} c >= k$.
Moreover,  the number of properly $(r,p)$ colored hyperedges is maximized when the $r$-coloring is balanced.
\end{theorem}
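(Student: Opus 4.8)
The plan is to reduce the extremal problem over all $n$-vertex hypergraphs to a statement about $r$-colorings of the complete hypergraph $K_n^k$, and then to sandwich the answer between two instances of Theorem~\ref{thm:exactmax} by a vertex deletion/insertion argument. For the reduction, note that for any $n$-vertex $k$-uniform $G(V,E)$ and any $r$-coloring $c$ of $V$, a hyperedge $e\in E$ is properly $(r,p)$ colored precisely when, viewed as a $k$-subset of $V$, it receives at least $p$ colors under $c$; since $E$ is a subfamily of the family of all $k$-subsets of $V$, the number of properly $(r,p)$ colored hyperedges of $G$ under $c$ never exceeds the number of $k$-subsets of $V$ receiving at least $p$ colors under $c$, with equality for $G=K_n^k$. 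Hence, letting $B(m)$ denote the minimum, over all $r$-colorings $c$ of $[m]$, of the number of $k$-subsets of $[m]$ that receive at most $p-1$ colors under $c$, we obtain $M(n,k,r,p)=\binom{n}{k}-B(n)$; and Theorem~\ref{thm:exactmax} together with Lemma~\ref{lemma:exactmin} gives $B(m)=\sum_{i=1}^{p-1}m(m,k,r,i)$ whenever $r\mid m$, the minimum being attained by the balanced coloring.

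The single elementary fact that does the work is a monotonicity under passing to a subset of the vertices: if $c$ is an $r$-coloring of $[m]$ and $W\subseteq[m]$, then every $k$-subset of $W$ that receives at most $p-1$ colors under the restriction $c|_W$ is also a $k$-subset of $[m]$ receiving at most $p-1$ colors under $c$ (a $k$-subset of $W$ is a $k$-subset of $[m]$, with the same multiset of colors), so the number of such ``bad'' $k$-subsets of $W$ is at most that of $[m]$. Recall $n_1\le n\le n_2$ are both multiples of $r$. Applying the fact with $m=n$ and $W=[n_1]$ shows that \emph{every} $r$-coloring of $[n]$ has at least $B(n_1)$ bad $k$-subsets, hence $B(n)\ge B(n_1)=\sum_{i=1}^{p-1}m(n_1,k,r,i)$, which yields the upper bound $M(n,k,r,p)\le\binom{n}{k}-\sum_{i=1}^{p-1}m(n_1,k,r,i)$. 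For the lower bound, take the balanced $r$-coloring $c^{*}$ of $[n_2]$ and apply the fact with $m=n_2$, $W=[n]$: the restriction $c^{*}|_{[n]}$ has at most $B(n_2)=\sum_{i=1}^{p-1}m(n_2,k,r,i)$ bad $k$-subsets, so $B(n)\le\sum_{i=1}^{p-1}m(n_2,k,r,i)$ and $M(n,k,r,p)\ge\binom{n}{k}-\sum_{i=1}^{p-1}m(n_2,k,r,i)$. When $r\mid n$ we have $n_1=n=n_2$ and both bounds collapse to Theorem~\ref{thm:exactmax}.

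The remaining clause, that the number of properly $(r,p)$ colored hyperedges is maximized by a balanced $r$-coloring, is exactly the statement proved in Section~\ref{sec:max}: beginning from any $r$-coloring, whenever two color classes have sizes differing by at least $2$, moving one vertex from the larger class to the smaller one does not decrease the number of $k$-subsets with at least $p$ colors; iterating this smoothing step terminates at a coloring whose class sizes differ by at most $1$, and that coloring attains $M(n,k,r,p)$.

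I expect the only substantive ingredient to be the smoothing (Schur-convexity-type) argument behind ``balanced is optimal'' supplied by Section~\ref{sec:max}; granting that, the reduction to $K_n^k$ and the deletion/insertion sandwich are routine, and the only point needing care is matching the closed forms $B(n_1)$ and $B(n_2)$ to the stated alternating sums $\sum_{i=1}^{p-1}m(n',k,r,i)$ --- in particular checking that the cut-off index $c$ is the least integer with $\tfrac{n'}{r}c\ge k$, below which the binomial coefficients appearing in the inclusion--exclusion expansion vanish.
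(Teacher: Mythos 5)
Your proof is correct and takes essentially the same route as the paper: use the balancing argument of Section~\ref{sec:max} (Lemma~\ref{lemma:min}/Theorem~\ref{thm:exactmax}) for the ``balanced is optimal'' clause, and sandwich the extremal count between the exact values at $n_1=\lfloor n/r\rfloor\cdot r$ and $n_2=\lceil n/r\rceil\cdot r$ vertices. The only difference is that you make the restriction/insertion monotonicity of the ``bad'' $k$-subset count explicit, a step the paper leaves implicit, which if anything makes the argument more rigorous.
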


%%%%%%%%%%%%%%%%%%%%%%%%%%%%%%%% End Summary of results %%%%%%%%%%%%%%%%%%%%%%%%%%%%%%%%%%%%%
%%%%%%%%%%%%%%%%%%%%%%%%%%%%%%%% Notations %%%%%%%%%%%%%%%%%%%%%%%%%%%%%%%%%%%%%
\subsection{Notations}
\label{subsec:notations}

1. For a set $A$,  $\binom{[A]}{r}$ denotes the set of all the distinct $r$-element subsets of $A$. For instance, $\binom{[n]}{r}$ denotes the set of all the distinct $r$-element subsets of $\{1,...,n\}$,
$|\binom{[n]}{r}| = \binom{n}{r}$.

\noindent 2. For a set $S=\{S_1,...,S_l\}$, for any fixed $l$, $U(S)$ denotes the union of the elements, 
i.e $U(S)=S_1 \cup ...\cup S_l$.

\noindent 3. {\bf Lexicographic ordering}. Consider a $n$-element set $V=\{v_1,...,v_n\}$ and a set of $k$-element subsets $E=\{e_1,...,e_m\}$ of $V$, where 
$e_i \subset V$, for $1 \leq i \leq m$. 
For any $v_q,v_r \in V$, $v_q \prec v_r$ if $q \leq r$. 
Let $e_i,e_j \in E$, where $e_i =\{v_{i1},...,v_{ik}\}$ and $e_j =\{v_{j1},...,v_{jk}\}$. Then, $e_i \prec e_j$ if
there exists an index $l$ such that $v_{i1}=v_{j1}$,...,$v_{i(l-1)}=v_{j(l-1)}$ and $v_{il} \prec v_{jl}$
An ordering $O$ of subsets of $E$ is a lexicographic ordering if for every $e_i,e_j \in O$, $e_i$ precedes $e_j$ in $O$ if and only if
   $e_i \prec e_j$.
%%%%%%%%%%%%%%%%%%%%%%%%%%%%%%%% Section 2 %%%%%%%%%%%%%%%%%%%%%%%%%%%%%%%%%%%%%
\section{Exact Number of properly $(r,p)$ colored hyperedges in a balanced partition}
\label{sec:exact}
Let $G(V,E)$ be a $n$-vertex $k$-uniform hypergraph, where $V$ denotes the vertex set and $E$ denotes the set of hyperedges.
An $r$-coloring $X$ of vertices in $V$ partitions the vertex set into $r$ color classes $A=\{A_1,...,A_r\}$,
where $A_j \subseteq V$, $1 \leq j \leq r$ and every vertex $v \in A_j$ receives the same color under $X$.
An $r$-coloring of vertices is called balanced if every color class is of almost same size,
i.e. for all $A_j \in A$, $|A_j|= \lceil \frac{n}{r}\rceil$ or  $|A_j|= \lfloor \frac{n}{r}\rfloor$.
Let $p$ be some fixed integer, $1<p \leq r$ and $p \leq k$.
In this section, we study the number of distinct hyperedges that consists of exactly $p$ distinct colors 
under any balanced $r$ coloring of $G$.
Throughout the section, we assume that $n$ is divisible by $r$, such that 
for all $A_j \in A$, $|A_j|=\frac{n}{r}$. 

Consider a balanced $r$ coloring $X$ of vertices a 
%$n$-vertex $k$-uniform hypergraph  $G(V,E)$.
$K_n^k$.
Let $A=\{A_1,...,A_r\}$ denote the corresponding color partition.
Let $m(n,k,r,p)$ denote the number of distinct hyperedges that consists of exactly $p$ distinct colors 
under $X$.
Let $B$ be the set of all the  $p$-element subsets $B_i$ of $A$, $1 \leq i \leq \binom{r}{p}$ i.e.
$B=\{B_i|B_i \text{ is the $i$th $p$-element subset of }\binom{[A]}{p} \}$.
Consider the $i$th $p$-element subset $B_i \in B$.
Let $m_i(n,k,r,p)$ denote the number of distinct hyperedges $e \in E$ 
that consists of exactly $p$ distinct colors 
under $X$ and $e \subseteq U(B_i)$.
Due to the balanced nature of the $r$-coloring $X$, 
note that $m_i(n,k,r,p)=m_l(n,k,r,p)$, for any $B_i,B_l \in B$.
Observe that 
\begin{align}\label{eq:total}
m(n,k,r,p) = \sum_{B_i \in B} m_i(n,k,r,p)= \binom{r}{p} m_i(n,k,r,p).
\end{align}
So, we focus our attention on computing $m_i(n,k,r,p)$ for a fixed $p$-element subset $B_i \in B$.
Without loss of generality, we consider $B_1=\{A_1,...,A_p\}$ as the fixed $p$-element subset of  $B$
and compute $m_1(n,k,r,p)$.

There are exactly $p$ subsets of size $p-1$ of $B_1$.
Let these sets be $P_1,...,P_p$, in the lexicographic order.
Let $N(P_j)$ denote the number of hyperedges $e \in E$ such that $e \subseteq U(P_j)$, $P_j \in B_1$,
and let $N(P_{j}...P_{l})$ denote the number of hyperedges $e \in E$ such that $e \subseteq U(P_{j}) \cap ... \cap 
U(P_{l})$ and, $P_{j},...P_{l} \in B_1$.	
Observe that $N(P_j)= \binom{\frac{n}{r}(p-1)}{k}$, $1 \leq j \leq p$.
So, 
\begin{align}\label{ineq:p-1}
\sum_{1\leq j \leq p} N(P_j) = p\binom{\frac{n}{r}(p-1)}{k}.
\end{align}

Note that if $e \subseteq U(P_{j1})$ and $e  \subseteq U(P_{j2})$, then $e \subseteq U(P_{j1}) \cap U(P_{j2})$. 
Observe that $P_{j1}$ and $P_{j2}$ can have at most $p-2$ parts in common; $e \subseteq U(P_{j1} \cap P_{j2})$
implies that $e$ lies in a fixed subset of $p-2$ parts of $P_{j1}$, that is also a subset of $P_{j2}$.
So, number of hyperedges $e$ that lie in a fixed $p-2$ parts $P_{j1} \cap P_{j2}$ is
 $N(P_{j1}P_{j2}) = \binom{\frac{n}{r}(p-2)}{k}$.
Since there are exactly $\binom{p}{2}$ distinct pairs of the form $\{P_{j1},P_{j2}\}$,
total number of hyperedges $e$ that are subsets of $p-2$-sized subsets of $B_1$ is
\begin{align}\label{ineq:p-2}
\sum_{1\leq j1 < j2 \leq p} N(P_{j1} P_{j2}) = \binom{p}{2}\binom{\frac{n}{r}(p-2)}{k}.
\end{align}

Let $c$ be the smallest integer such that $\frac{n}{r}c >= k$. Then, $\frac{n}{r}(c-1) < k$.
Consider any fixed $c-1$ parts $A_{j1},...,A_{j(c-1)}$. Observe that for any hyperedge $e \in E$, $e \not\subseteq 
A_{j1} \cup .... \cup A_{j(c-1)}$.
So, we compute all the summations of the form $\sum_{1\leq j1 < j2...< jc \leq p} N(P_{j1} P_{j2}...P_{jc})$
till $\frac{n}{r}c \geq k$, where 
\begin{align}\label{ineq:c}
\sum_{1\leq j1 < j2...< jc \leq p} N(P_{j1} P_{j2}...P_{jc}) = \binom{p}{c}\binom{\frac{n}{r}c}{k}.
\end{align}

Observe that if a hyperedge $e$ is a subset of $B_1$ and is not a subset of any of the $P_j$,
 $P_j \in \{P_1,...,P_p\}$, then $e$ consists of exactly $p$ colors in the $r$-coloring.
The total number of hyperedges $e \subseteq B_1$ is $N(B1)=\binom{\frac{n}{r}p}{k}$.
So, by definition, $N(P'_1,...,P'_p)$ denotes all the hyperedges $e \subseteq B_1$ such that 
$e$ consist of exactly $p$ colors, i.e. $m_1(n,k,r,p)=N(P'_1,...,P'_p)$.
%Observe that $\sum_{B_1 \in B} N(P'_1,...,P'_p)$ denotes all the hyperedges $e \in E$ such that 
%$e$ consist of exactly $p$ colors.
In order to compute $m_1(n,k,r,p)$, we use the fundamental result of inclusion exclusion stated below.
%We state the theorem below for the sake of completion.
\begin{theorem}\cite{rosen2002}
\label{thm:incexc}
Let $A$ be any $n$-element set, and let $P_1$, ..., $P_m$ denote $m$ properties of elements of $A$.
Let $A_i \subset A$ is the subset of 
elements of $A$ with property $P_i$.
Let $N(P_i)$ denote the number of elements of $A$ with property $P_i$, i.e. $N(P_i)=|A_i|$, for $1 \leq i \leq m$.
Let $N(P_iP_j...P_l)=|A_i \cap A_j \cap ...\cap A_l|$.
Let $N(P_i')$ denote the number of elements of $A$ that does not satisfy property $P_i$ and 
the number of elements with none of the properties $P_i$,$P_j$...,$P_l$ is denoted by
$N(P_i'P_j'...P_l')$.
Then,
\begin{align}
N(P_1'P_2'...P_m')=n- \sum_{1\leq i \leq m} N(P_i)+\sum_{1\leq i < j \leq m} N(P_iP_j)-...+ 
(-1)^{m} N(P_1P_2...P_m).
\end{align}
\end{theorem}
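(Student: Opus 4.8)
The plan is to prove Theorem~\ref{thm:incexc} by a direct double-counting argument: I would show that every element of $A$ contributes exactly the same amount to the left-hand side as to the right-hand side of the identity, so the two sides must be equal. Fix an arbitrary element $x \in A$ and let $t = t(x)$ be the number of properties among $P_1, \dots, P_m$ that $x$ satisfies. The key observation is that $x$ is counted by a level-$j$ term $N(P_{i_1} \cdots P_{i_j}) = |A_{i_1} \cap \cdots \cap A_{i_j}|$ precisely when all $j$ indices $i_1, \dots, i_j$ lie among the $t$ properties that $x$ has; hence, among the $\binom{m}{j}$ level-$j$ terms, $x$ is counted exactly $\binom{t}{j}$ times. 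Reading the $n = |A|$ in front of the sum as the (unique) level-$0$ term, which counts every element once, the total signed contribution of $x$ to the right-hand side is $\sum_{j=0}^{m} (-1)^j \binom{t}{j} = \sum_{j=0}^{t} (-1)^j \binom{t}{j}$, where the upper limit collapses from $m$ to $t$ because $\binom{t}{j} = 0$ whenever $j > t$.

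Next I would evaluate this alternating sum in the two possible cases. If $t = 0$, i.e. $x$ has none of the properties, the sum is just $\binom{0}{0} = 1$, which matches the fact that such an $x$ is counted exactly once on the left-hand side $N(P_1' P_2' \cdots P_m')$. If $t \geq 1$, the binomial theorem gives $\sum_{j=0}^{t} (-1)^j \binom{t}{j} = (1-1)^t = 0$, which matches the fact that an $x$ satisfying at least one property is not counted at all on the left-hand side. Since each element of $A$ contributes the same quantity to both sides, summing over all $x \in A$ yields the claimed equality. As an alternative route one could argue by induction on $m$, using $N(P_1' \cdots P_{m-1}') = N(P_1' \cdots P_{m-1}' P_m) + N(P_1' \cdots P_{m-1}' P_m')$ and applying the inductive hypothesis both to $A$ and to the subset $A_m$; but the double-counting argument is shorter and more transparent.

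The main obstacle here is purely bookkeeping rather than any real mathematical difficulty: one must set up correctly the correspondence between the level-$j$ terms of the statement and the $j$-element subsets of $\{1,\dots,m\}$, verify that an element whose property-set has size $t$ appears in exactly those level-$j$ terms whose index set is contained in its property-set, and treat the standalone $n$ as the level-$0$ contribution. Once this correspondence is in place, the elementary identity $\sum_{j=0}^{t} (-1)^j \binom{t}{j} = 1$ for $t = 0$ and $= 0$ for $t \geq 1$ finishes the proof immediately.
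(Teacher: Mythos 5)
Your double-counting argument is correct and complete: the contribution of an element with $t$ properties to the right-hand side is $\sum_{j=0}^{t}(-1)^j\binom{t}{j}$, which equals $1$ if $t=0$ and $0$ otherwise, exactly matching its contribution to $N(P_1'P_2'\cdots P_m')$. Note, however, that the paper itself gives no proof of Theorem~\ref{thm:incexc}; it simply imports the inclusion--exclusion principle as a known result from \cite{rosen2002}, so there is nothing to compare against -- your argument is just the standard textbook proof of that cited result, and it is perfectly sound.
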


So, using principle of inclusion exclusion \ref{thm:incexc}, we have,
\begin{align}
N(P'_1,...,P'_p)= &N(B1) - \sum_{1\leq j \leq p} N(P_j)+\sum_{1\leq j1 < j2 \leq p} N(P_{j1} P_{j2})- ... (-1)^c \sum_{1\leq j1 < j2...< jc \leq p} N(P_{j1} P_{j2}...P_{jc}) \nonumber\\
=& \binom{\frac{n}{r}p}{k}- p\binom{\frac{n}{r}(p-1)}{k}+\binom{p}{2}\binom{\frac{n}{r}(p-2)}{k}... (-1)^c
	\binom{p}{c}\binom{\frac{n}{r}c}{k}.
\end{align}

Now, using Equation \ref{eq:total}, we get,
$m(n,k,r,p)= \binom{r}{p}\Big( \binom{\frac{n}{r}p}{k}- p\binom{\frac{n}{r}(p-1)}{k}+\binom{p}{2}\binom{\frac{n}{r}(p-2)}{k}... (-1)^c \binom{p}{c}\binom{\frac{n}{r}c}{k} \Big)$,
where $c$ is the smallest integer such that $\frac{n}{r}c >= k$.
This concludes the proof of Lemma \ref{lemma:exactmin}.

Observe that summing over all the hyperedges with exactly $i$ distinct colors, $1 \leq i \leq p-1$,
we get the number of hyperedges that are colored with at most $p-1$ colors by any balanced $r$-coloring, provided $r$ divides $n$. Therefore, the exact number of properly $(r,p)$ colored hyperedges in a balanced partition is
\begin{align}\label{eq:M1}
M(n,k,r,p)=\binom{n}{k} - \sum_{i=1}^{p-1} m(n,k,r,i).
\end{align}
Consider the case when $r=p=2$, i.e., when we are performing a bicoloring on $n$ vertices and proper coloring of a hyperedge $e$ denote $e$ becoming non-monochromatic under the bicoloring.
Observe that $M(n,k,2,2)=\binom{n}{k} - m(n,k,2,1)$, and $m(n,k,2,1) = 2*\binom{\frac{n}{2}}{k}$.
Therefore, $M(n,k,2,2)=\binom{n}{k} - 2*\binom{\frac{n}{2}}{k}$, which agrees with the existing results.
Note that $M(n,k,r,p)$ is a non-decreasing function of $n$. So, $M(n-1,k,r,p)\leq M(n,k,r,p) \leq M(n+1,k,r,p)$.
%We state the result as the following lemma.
%\begin{lemma}\label{lemma:exact}
%The exact number of properly $(r,p)$ colored hyperedges of a $K_n^k$ in a balanced $r$-coloring is $M(n,k,r,p)=\binom{n}{k} - \sum_{i=1}^{p-1} m(n,k,r,i)$,
%where $m(n,k,r,i)= \binom{r}{i}\Big( \binom{\frac{n}{r}i}{k}- i\binom{\frac{n}{r}(i-1)}{k}+\binom{i}{2}\binom{\frac{n}{r}(i-2)}{k}... (-1)^c \binom{i}{c}\binom{\frac{n}{r}c}{k} \Big)$, and,
%$c$ is the smallest integer such that $\frac{n}{r}c >= k$, given $r$ divides $n$.
%\end{lemma}

Let $x(i,j,n,k,r)=\binom{r}{i}\binom{\frac{n}{r}i}{k} - \frac{r-j}{i-j+1}x(i,j-1,n,k,r)$.
 $x(i,j,n,k,r)$ denotes the number of hyperedges that are colored with less than or equal to $j$ colors by an $r$-coloring,
when counted with respect to color classes of size $i$, $i\geq j$.
Here, the term $\binom{r}{i}\binom{\frac{n}{r}i}{k}$ accounts for every hyperedge $e \in E$, that is a subset of
some fixed $i$ color parts of the $r$-coloring.
Any $(j-1)$-sized color parts are repeated $r-j+1$ times when counted over all $j$-sized color classes; however, we need to count it exactly once.
Each hyperedge inside some fixed $i$-sized set is counted $i-j+1$ times over all the $j-1$ sized sets.
So, $\binom{r}{i}\binom{\frac{n}{r}i}{k} - \frac{r-j+1}{i-j+1}x(i,j-1,n,k,r)+\frac{1}{i-j+1}x(i,j-1,n,k,r)$ counts 
the number of hyperedges that are colored with less than or equal to $j$ colors by an $r$-coloring,
when counted with respect to color classes of size $i$, $i\geq j$. $\frac{1}{i-j+1}x(i,j-1,n,k,r)$ term is added in order to include the hyperedges colored with less than or equal to $j-1$ colors.
Observe that $x(p-1,p-1,n,k,r)$ denotes the number of hyperedges colored with less than or equal to $p-1$ colors by a balanced $r$-coloring.
Therefore, 
\begin{align}
M(n,k,r,p)=\binom{n}{k} - x(p-1,p-1,n,k,r).
\end{align}

\section{Maximizing the number of properly $(r,p)$ colored hyperedges}
\label{sec:max}

In this section, we show that the number of properly $(r,p)$ colored hyperedges is maximized when the $r$-coloring is balanced. We show that the number of hyperedges colored with less than or equal to $p-1$ colors is minimized for a balanced $r$-coloring, thereby proving the above claim.

Consider an $r$-coloring $X$ of vertices a 
%$n$-vertex $k$-uniform hypergraph  $G(V,E)$.
$K_n^k$.
Let $A=\{A_1,...,A_r\}$ denote the corresponding color partition and let $|A_i|=n_i$, for $1 \leq i \leq r$.
Let $m_{|X}(n,k,r,p)$ denote the number of distinct hyperedges that consists of at most $p$ distinct colors 
under $X$.
Let $n_1 \geq n_2+2$. Then we have the following lemma.
\begin{lemma}\label{lemma:min}
The number of hyperedges colored with at most $p$  colors is reduced by moving a vertex $v \in A_1$ from $A_1$ to $A_2$, i.e. switching the color of $v$ from 1 to 2 produces an $r$-coloring $X'$ such that 
$m_{|X'}(n,k,r,p) < m_{|X}(n,k,r,p)$.
\end{lemma}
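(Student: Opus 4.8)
The plan is to isolate exactly how the count of low-colored hyperedges changes when the single vertex $v$ is recolored, and then to exhibit a non-surjective injection that witnesses a strict decrease. First I would observe that a hyperedge $e$ not containing $v$ has the same color set under $X$ and $X'$, so it contributes equally to $m_{|X}(n,k,r,p)$ and $m_{|X'}(n,k,r,p)$; hence their difference comes entirely from hyperedges $e\ni v$. Writing such an $e$ as $\{v\}\cup f$ with $f$ a $(k-1)$-subset of $V':=V\setminus\{v\}$, and letting $C(f)$ be the set of colors occurring in $f$ (the same under $X$ and $X'$, since those vertices keep their colors), the color set of $e$ equals $C(f)\cup\{1\}$ under $X$ and $C(f)\cup\{2\}$ under $X'$. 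Comparing the indicators $[\![\,|C(f)\cup\{1\}|\le p\,]\!]$ and $[\![\,|C(f)\cup\{2\}|\le p\,]\!]$ case by case on whether colors $1,2$ lie in $C(f)$, one sees they agree unless exactly one of $1,2$ lies in $C(f)$, and in that situation they differ precisely when $|C(f)|=p$. This gives the clean identity
\[
m_{|X}(n,k,r,p)-m_{|X'}(n,k,r,p)=N_1-N_2,
\]
where $N_1$ (resp.\ $N_2$) is the number of $(k-1)$-subsets $f$ of $V'$ that use exactly $p$ colors, one of which is color $1$ (resp.\ color $2$) and none of which is color $2$ (resp.\ color $1$).

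Next I would build a map $\Psi$ from the $N_2$-family into the $N_1$-family. In $V'$ the color-$1$ class is $A_1\setminus\{v\}$ of size $n_1-1$ and the color-$2$ class is $A_2$ of size $n_2$, and the hypothesis $n_1\ge n_2+2$ gives $n_1-1>n_2$. Fix any injection $\theta\colon A_2\hookrightarrow A_1\setminus\{v\}$ and set $\Psi(f)=\big(f\setminus A_2\big)\cup\theta\big(f\cap A_2\big)$, i.e.\ replace each color-$2$ vertex of $f$ by its $\theta$-image. Then $\Psi(f)$ has $k-1$ vertices, still uses color $1$ (now through the $\theta$-images, and $f\cap A_2\ne\emptyset$ because $f$ uses color $2$), loses color $2$, and retains every other color of $f$, so it has exactly $p$ colors and lies in the $N_1$-family. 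Injectivity is immediate: an $N_2$-set has no color-$1$ vertex, so from $h=\Psi(f)$ one recovers $f\cap A_2=\theta^{-1}\big(h\cap(A_1\setminus\{v\})\big)$ and $f\setminus A_2=h\setminus(A_1\setminus\{v\})$. This already yields $N_2\le N_1$, hence $m_{|X'}\le m_{|X}$.

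For the strict inequality I would show $\Psi$ is not onto. Its image contains only those $N_1$-sets $h$ whose color-$1$ part is contained in $\theta(A_2)$; since $|\theta(A_2)|=n_2<n_1-1=|A_1\setminus\{v\}|$, there is a color-$1$ vertex $w\notin\theta(A_2)$. Taking any $N_1$-set, if it already uses such a $w$ it is outside the image, and otherwise swapping one of its color-$1$ vertices for $w$ produces another $N_1$-set with the same color set that is outside the image. Thus the $N_1$-family is strictly larger than the image of $\Psi$, so $N_1>N_2$, giving $m_{|X'}(n,k,r,p)<m_{|X}(n,k,r,p)$.

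The one place where care is genuinely needed is this last non-surjectivity step: it presupposes that the $N_1$-family is nonempty, which requires $p$ distinct colors to be available that include color $1$ and avoid color $2$ (so $p<r$) and enough vertices in the relevant classes to host a $(k-1)$-set meeting each of them. Both hold in the regime where the lemma is used — pushing an $r$-coloring of $K_n^k$ toward a balanced one, where $p\le r$ and $n$ is large relative to $k$ — and can, if one prefers, be recorded as an explicit non-degeneracy hypothesis; the rest of the argument is purely formal. I therefore expect the combinatorial identity $m_{|X}-m_{|X'}=N_1-N_2$ and the construction of $\Psi$ to be routine, and the handling of these degeneracies to be the only subtle point.
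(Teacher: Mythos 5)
Your proposal is correct, and its skeleton matches the paper's: both arguments observe that only hyperedges through $v$ can change status, and both reduce the claim to comparing the family of $(k-1)$-sets using exactly $p$ colors with color $1$ present and color $2$ absent (your $N_1$, the paper's loss $l$) against the symmetric family (your $N_2$, the paper's gain $g$). Where you genuinely differ is the comparison step: the paper writes $l=\sum_{i}\binom{n_1-1}{i}\,y(\cdot)$ and $g=\sum_{i}\binom{n_2}{i}\,y(\cdot)$, where $y(\cdot)$ counts completions by $k-i-1$ vertices of exactly $p-1$ further colors outside $A_1\cup A_2$, and concludes $l>g$ by termwise comparison of the binomial coefficients, whereas you prove $N_2\le N_1$ via the substitution injection $\Psi$ built from $\theta\colon A_2\hookrightarrow A_1\setminus\{v\}$ and then argue non-surjectivity. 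Your route buys two things: it sidesteps the paper's awkwardly defined quantity $y$ (described there as a minimum over colorings, though what is actually needed is the completion count under the fixed coloring $X$, the same factor in both sums), and it makes explicit the non-degeneracy required for strictness, namely $N_1>0$. That caveat is not a defect of your argument relative to the paper: the paper's termwise comparison silently assumes the common factor $y(\cdot)$ is nonzero for some $i$, and when it vanishes for every $i$ (for instance $p=r$, or $p=k$, or too few nonempty remaining color classes) one gets $l=g=0$ and only the non-strict inequality, exactly the degenerate situation you flag; the non-strict version is all that the balancing argument for Theorem \ref{thm:exactmax} actually needs, and in the regime in which the lemma is invoked (the lemma's parameter equals $p-1\le\min(r,k)-1$ with all classes nonempty) strictness holds for both proofs.
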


\begin{proof}
In order to prove that $m_{|X'}(n,k,r,p) < m_{|X}(n,k,r,p)$, we analyze: 
(i) the {\it gain } $g$: the number of hyperedges $e \in E$ such that $e$ is colored with greater than $p$ colors under $X$ and $e$ receives at most $p$ colors under $X'$, and,
(ii) the {\it loss} $l$: the number of hyperedges $e \in E$ such that $e$ is colored with at most $p$ colors under $X$ and $e$ receives at least $p+1$ colors under $X'$.
Note that a hyperedge $e\in E$ contributes to $g$ or $l$ if and only if 
$v \in e$.
Since $m_{|X'}(n,k,r,p) = m_{|X}(n,k,r,p)+g-l$,
in order to prove Lemma \ref{lemma:min}, we need to show that $l>g$.

Let $y(n,k,r,p)$ denote the minimum number of $k$-uniform hyperedges on $n$ labeled vertices that are colored with exactly $p$ colors by any $r$ coloring.
Observe that a hyperedge $e\in E$ contributes to $g$ if and only if 
it consists of exactly $p+1$ colors in $X$, 
$v \in e$ and includes no other vertex from  $A_1$, i.e., $e \cap A_1 = v$, and
includes at least one vertex from $A_2$, i.e., $e \cap A_2 \geq 1$.
So, gain due to switching $v$ from $A_1$ to $A_2$ is
\begin{align}
g = \sum_{i=1}^c \binom{n_2}{i} y(n-n_1-n_2,k-i-1,r-2,p-1),
\end{align}
where $c$ be the smallest integer such that $\frac{n}{r}c >= k$.
In each of the $c$ terms in the summation, $\binom{n_2}{i}$ denotes the number of ways to choose exactly $i$ vertices from $A_2$ (of color 2), $y(n-n_1-n_2,k-i-1,r-2,p-1)$ denotes the minimum number of hyperedges that can be formed 
consisting of  exactly $k-(i+1)$ vertices from $A \setminus (A_1 \cup A_2)$ and exactly $p-1$ distinct colors.
The $k-(i+1)$ vertices from $A \setminus (A_1 \cup A_2)$ with $p-1$ distinct colors combined with $i$ vertices from $A_2$ and $v$ from $A_1$ forms the hyperedges $e$ consisting of exactly $p+1$ colors under coloring $X$ 
including $v$, $e \cap A_1 = v$,
and $|e \cap A_2|=i$.

Similarly,  a hyperedge $e \in E$ contributes to $l$ if and only if 
it consists of exactly $p$ colors in $X$, 
includes no other vertex from  $A_2$, i.e., $e \cap A_2 = \phi$, and
$v \in e$ and includes at least one vertex other than $v$ from $A_1$, i.e., $|e \cap A_1| \geq  2$.
So, loss due to switching $v$ from $A_1$ to $A_2$ is
\begin{align}
l = \sum_{i=1}^c \binom{n_1 - 1 }{i} y(n-n_1-n_2,k-i-1,r-2,p-1).
\end{align}
Since $n_1 \geq n_2+2$, $n_1-1 > n_2$. So, comparing $l$ and $g$ term-wise, we get $l > g$ as desired.
\qed
\end{proof}

Lemma \ref{lemma:min} implies that the number of hyperedges colored with less than $p$ colors can be minimized until
the color partition $\{A_1,...,A_r\}$ is balanced, i.e. for every $i$, $1 \leq i \leq r$, 
$\lfloor \frac{n}{r} \rfloor \leq |A_i| \leq \lceil\frac{n}{r}\rceil$.
Therefore, the number of properly $(r,p)$ colored hyperedges is maximized when the $r$-coloring is balanced.
So, using Equation \ref{eq:M1}, Theorem \ref{thm:exactmax} follows.
%\begin{theorem}
%The maximum number of properly $(r,p)$ colored hyperedges of a $K_n^k$ in any $r$-coloring
%(i) is $M(n,k,r,p)=\binom{n}{k} - \sum_{i=1}^{p-1} m(n,k,r,i)$,
%where $m(n,k,r,i)= \binom{r}{i}\Big( \binom{\frac{n}{r}i}{k}- i\binom{\frac{n}{r}(i-1)}{k}+\binom{i}{2}\binom{\frac{n}{r}(i-2)}{k}... (-1)^c \binom{i}{c}\binom{\frac{n}{r}c}{k} \Big)$, and,
%$c$ is the smallest integer such that $\frac{n}{r}c >= k$, and,
%(ii) the $r$-coloring that maximizes the number of properly colored hyperedges splits the vertex set into equal sized parts,
% provided $r$ divides $n$.
%\end{theorem}

%**************
Observe that even if $r$ does not divide $n$, the $r$-coloring that maximizes the number of properly colored hyperedges splits the vertex set into almost equal sized parts (from Lemma \ref{lemma:min}) of either $\lfloor\frac{n}{r} \rfloor $ or $\lceil \frac{n}{r} \rceil$ size. 
%Moreover, note that $M(n,k,r,p)$ is a non-decreasing function of $n$. So, $M(n-1,k,r,p)\leq M(n,k,r,p) \leq M(n+1,k,r,p)$.
%This implies that $m(n-1,k,r,p)\geq m(n,k,r,p) \geq m(n+1,k,r,p)$.
Therefore, we can get a upper bound on $M(n,k,r,p)$
by computing the minimum number of hyperedges including vertices of at most $p-1$ distinct colors 
with $\lfloor\frac{n}{r} \rfloor \cdot r$ vertices and subtracting from $\binom{n}{k}$.
Furthermore,  we can get a lower bound on $M(n,k,r,p)$
by computing the minimum number of hyperedges including vertices of at most $p-1$ distinct colors 
with $\lceil \frac{n}{r} \rceil \cdot r$ vertices and subtracting from $\binom{n}{k}$.
This observation combined with Theorem \ref{thm:exactmax} proves Theorem \ref{thm:boundminmax}.

For the special case when $r = p = k$, we can compute $M(n,k,r,p)$ much easily.
Observe that any hyperedge must contain one vertex each from each of the color classes $\{A_1,...A_r\}$
in order to be properly $(r,p)$ colored. So, the number of properly colored hyperedges under any $r$-coloring
is $|A_1||A_2|...|A_r|$. Using the second part of Theorem \ref{thm:boundminmax},
$M(n,k,r,p)=|A_1||A_2|...|A_r|$, where $\{A_1,...A_r\}$ is a balanced partition.
So, we have the following corollary.
\begin{corollary}\label{cor:exactmax}
The number of properly $(r,p)$ colored hyperedges of a $K_n^k$ in any $r$-coloring
 is $|A_1||A_2|...|A_r|$ when $r=p=k$. Moreover,
the $r$-coloring that maximizes the number of properly colored hyperedges splits the vertex set into almost 
equal sized parts.
\end{corollary}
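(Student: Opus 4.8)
The plan is to reduce the corollary to a purely arithmetic optimization. First I would observe that in the regime $r=p=k$ the notion of ``properly $(r,p)$ colored'' collapses to something rigid: a hyperedge $e$ of $K_n^k$ has exactly $k$ vertices, so demanding that $e$ contain vertices of at least $p=k$ distinct colors forces all $k$ vertices of $e$ to be colored differently; and since only $r=k$ colors are available, this in turn forces $e$ to contain \emph{exactly one} vertex from each color class $A_1,\dots,A_r$. Conversely, any transversal picking one vertex from each $A_i$ spans a properly colored hyperedge, and since $G=K_n^k$ contains every $k$-subset, all such transversals are present. Hence, for a fixed $r$-coloring with classes $A_1,\dots,A_r$, the number of properly $(r,p)$ colored hyperedges is exactly $|A_1||A_2|\cdots|A_r|$; in particular it is $0$ whenever some color is unused, consistent with the general theory since then every $k$-set meets at most $k-1$ colors.

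It then remains to maximize $|A_1||A_2|\cdots|A_r|$ over all ways of writing $n=|A_1|+\cdots+|A_r|$. I would invoke the second part of Theorem \ref{thm:boundminmax} (equivalently Lemma \ref{lemma:min}) to conclude directly that the optimum is attained by a balanced partition. For self-containedness one can also argue from scratch by the standard exchange argument: if two classes satisfy $|A_i|\ge |A_j|+2$, then replacing the sizes $(|A_i|,|A_j|)$ by $(|A_i|-1,|A_j|+1)$ keeps the sum fixed and scales the product by $\tfrac{(|A_i|-1)(|A_j|+1)}{|A_i|\,|A_j|}=1+\tfrac{|A_i|-|A_j|-1}{|A_i|\,|A_j|}>1$, a strict increase; moreover each such move strictly decreases the nonnegative integer $\sum_i |A_i|^2$, so the process terminates at a partition in which all parts differ by at most one, i.e. each $|A_i|\in\{\lfloor n/r\rfloor,\lceil n/r\rceil\}$. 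This identifies the claimed maximizer, with optimal value $\lfloor n/r\rfloor^{\,r-s}\lceil n/r\rceil^{\,s}$ where $s=n-r\lfloor n/r\rfloor$.

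Since the product formula is exact for \emph{every} $r$-coloring, not merely the optimal one, there is no genuine obstacle here; the only point requiring a moment's care is the opening reduction, namely the equivalence between ``at least $k$ distinct colors among $k$ vertices, from a palette of size $k$'' and ``a system of distinct representatives of the color classes''. Everything after that is the elementary AM--GM / exchange fact, which is in any case already subsumed by the balancedness conclusion of Theorem \ref{thm:boundminmax}.
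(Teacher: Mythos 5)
Your proposal is correct and takes essentially the same route as the paper: the observation that with $r=p=k$ a properly colored hyperedge must contain exactly one vertex from each color class, which gives the product $|A_1||A_2|\cdots|A_r|$ for every coloring, followed by an appeal to the balancedness conclusion of Theorem~\ref{thm:boundminmax} (via Lemma~\ref{lemma:min}) for the maximization. Your added self-contained exchange argument (that moving a vertex from a class of size $\geq |A_j|+2$ strictly increases the product and strictly decreases $\sum_i |A_i|^2$) is a correct and welcome independent verification, and it even pins down the optimal value $\lfloor n/r\rfloor^{r-s}\lceil n/r\rceil^{s}$, but it does not change the overall approach.
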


\bibliographystyle{plain}
\bibliography{MAX_references}

\end{document}